\colorlet{mygreen}{green!60!black}
\colorlet{mygrey}{white!60!black}
\newlength{\wirespace} 
\newlength{\wirespaceh} 
\newlength{\wirespaceht} 
\newlength{\wirespacehb} 
\newlength{\boxsize}
\newlength{\messpace} 
\theoremstyle{definition}
\theoremstyle{theorem}
\newtheorem{theorem}{Theorem}
\def \be {\begin{equation}} 
\def \ee {\end{equation}}
\def \bes {\begin{equation*}}
\def \ees {\end{equation*}}
\def \baa {\begin{align}}
\def \eaa {\end{align}}
\def \baas {\begin{align*}}
\def \eaas {\end{align*}}
\def \bea {\begin{eqnarray}}
\def \eea {\end{eqnarray}}
\def \beas {\begin{eqnarray*}}
\def \eeas {\end{eqnarray*}}
\newcommand{\ket}[1]{\rvert#1\rangle}
\newcommand{\bra}[1]{\langle #1\rvert}
\newcommand{\hil}{\mathcal{H}^}
\newcommand{\ketbra}[2]{| #1 \rangle \langle #2 |}
\newcommand{\comp}[1]{\overline{#1}}
\newcommand{\one}{\mathds{1}}
\newcommand{\bol}{\boldsymbol} 
\newcommand{\Uni}{\text{U}} 
\newcommand{\Uf}{\mathsf U} 
\renewcommand{\dim}{\text{Dim}} 
\newcommand{\calO}{\mathcal{O}}
\newcommand{\totalsys}{\mathfrak{U}} 
\newcommand{\Ut}{U_t} 
\newcommand{\gen}{\mathbf{gen}} 
\newcommand{\whole}{\mathfrak U} 
\title{Realism and the Inequivalence
 of the\\
  Two Quantum Pictures\thanks{Forthcoming in Alyssa Ney (ed.), \textit{Local Quantum Mechanics: Everett, Many Worlds, and Reality}. New York: Oxford University Press.}}
\author{Charles Alexandre B\'edard}
\affil{\small {École de technologie supérieure}\\
\footnotesize \emph{charles.alexandre.bedard@etsmtl.ca}}
\date{July 2026
}
\begin{document}
\maketitle
\begin{abstract}\noindent The standard claim that the Schrödinger and Heisenberg pictures of quantum mechanics are equivalent rests on the fact that they yield identical empirical predictions. This equivalence therefore assumes the instrumentalist worldview in which theories serve only as tools for prediction. Under scientific realism, by contrast, theories aim to describe reality. Whereas the Schrödinger picture posits a time-evolving wave function, the Heisenberg picture posits so-called descriptors, time-evolving generators of the algebra of observables. These two structures are non-isomorphic: descriptors surject onto but do not reduce to the Schrödinger state. Hence, under realism, the pictures are inequivalent. I argue that this inequivalence marks an opening toward a richer, separable ontology for quantum theory. On explanatory grounds, descriptors provide genuinely local accounts of superdense coding, teleportation, branching, and Bell inequality violations—phenomena that the Schrödinger framework does not explain fully locally.
\end{abstract}

\section{Introduction}
In the Heisenberg picture of unitary quantum mechanics, physical systems are described in a fully local and separable way, as shown by Deutsch and Hayden at the turn of the millennium~\cite{deutsch2000information}. 
This formulation resolves the apparent nonlocality in quantum teleportation and superdense coding, and accounts for a truly local branching, 
which underlies a local explanation of Bell violations.
None of this has a counterpart in the Schrödinger picture, 
yet it is routinely asserted that the two pictures are equivalent. 
\emph{They are not}.

In the Schrödinger picture, the universe is described by a time-evolving wavefunction. 
In the Heisenberg picture, it is instead described by time-evolving local generators of the algebra of observables.
Previous work has noted that distinct Heisenberg-descriptor assignments correspond to the same Schrödinger state~\cite{timpson2005nonlocality, wallacetimpson07, raymond2021local, bedard2021cost}.
This many-to-one correspondence cannot be dismissed as a mere representational redundancy.
Rather, it means that 
the stories told in each picture are not in one-to-one correspondence.
The Heisenberg description is richer, and it surjects onto the more restricted and observation-driven wave function of the Schrödinger picture.
Thus under \emph{scientific realism}, which posits the existence of a real world and considers theories as attempts to describe it, the two pictures are inequivalent.
%

Dirac~\cite{dirac1964foundations, dirac1965quantum} pointed out the inequivalence between the two pictures, declaring: “\emph{The Heisenberg picture is a good picture, the Schrödinger picture is a bad picture, and the two pictures are not equivalent.}” 
While I share this conclusion, 
Dirac was concerned with quantum electrodynamics, where for certain Hamiltonians, the Schrödinger picture admits no solution, not even approximate ones, and not even for the vacuum state. 
My argument, by contrast, does not involve quantum electrodynamics. 
A modest network of qubits suffices to expose the conceptual gap.

Moreover, retrospective analyses of the infancy of quantum mechanics rejected the early claims of equivalence between matrix~\cite{heisenberg1925quantentheoretische} and wave mechanics~\cite{schrodinger1926quantisierung}, since the `proofs' by Schrödinger~\cite{schrodinger1926verhaltnis} and Eckart~\cite{eckart1926operator} were later recognized as inadequate~\cite{hanson1961wave, muller1997equivalence1, muller1997equivalence2}. 
Yet the critics of these early claims of equivalence did not—as I do here—contest its modern form. 
For instance, according to Hanson, the equivalence was only established with Born's statistical interpretation, `which at last makes it a matter of indifference which algorithm one chooses to express his predictions'~\cite{hanson1961wave}.

Hanson’s reduction of a theory to an algorithm for making predictions reflects the philosophical stance known as \emph{instrumentalism}, a view held by many quantum physicists.
If the sole aim of science is prediction---if its only goal is to compute distributions of observed outcomes irrespective of how those outcomes come about---then the two pictures are indeed equivalent. Namely, they are \emph{instrumentally} equivalent. 
Under scientific realism, however, the verdict is different.

\section{The Instrumentalist Equivalence}\label{sec:instr}

This section revisits the standard presentation of the relationship between the two pictures---a staple of physicists’ training.
It operates on two levels: a shared mathematical framework from which both pictures are built, and the claim that they yield identical predictions.

The connection between Heisenberg's matrix mechanics~\cite{heisenberg1925quantentheoretische} and Schrödinger's wave theory~\cite{schrodinger1926quantisierung} required substantial mathematical groundwork, which culminated in the work of von Neumann~\cite{von1955mathematical}.
Below are the axioms necessary to relate the two pictures, where I leave aside the technicalities of infinite dimensionality.
\begin{itemize}
\item[\textbf{A1.}]States, denoted by $\ket \psi$, are unit vectors in a Hilbert space $\mathcal H$;
\item[\textbf{A2.}] Observables, denoted by $\mathcal O$, are self-adjoint operators on $\mathcal{H}$; 
\item[\textbf{A3.}] The dynamics, here denoted~$\Ut$ between time~$0$ and~$t$, is a unitary operator;
\item[\textbf{A4.}] 
Measurement predictions are given by the Born rule: for state $\ket{\psi}$ and observable $\mathcal{O}$, the expectation value of observed outcomes is $\bra \psi  \mathcal{O} \ket{ \psi}$.
\end{itemize}
As expressed above, the axioms are picture-agnostic: since dynamics are expressed independently of states and observables, there is no commitment to the evolution of either. 
From this mathematical machinery, both pictures can be constructed.
An initial state $\ket{\psi_0}$ and an initial algebra of observables $\{\mathcal O_0\}$ are fixed.
In the Schrödinger picture, the system is described by a time-evolving state, \mbox{$\ket{\psi_t} = \Ut \ket{\psi_0}$}, 
while the observables remain fixed.
In the Heisenberg picture, the system is described by time-evolving observables, \mbox{$\mathcal O_t =\Ut^\dagger \,\calO_0\, \Ut $}, alongside the fixed~$\ket{\psi_0}$.

The equivalence is usually taken to rest on a simple identity: both pictures give the same Born-rule expectation values (axiom~\textbf{A4}), namely

\vspace{7pt}
\begin{tikzpicture}
		\node at (0, 0){\textsc{~}};
		\node at (3.4, 0.4){\textsc{Schr\"odinger}};
		\node at (3.4, 0){\textsc{picture}};
		\node at (9.9, 0.4){\textsc{Heisenberg}};
		\node at (9.9, 0){\textsc{picture}};
\end{tikzpicture}
\vspace{-8pt}
\be\label{eq:instreq}
\bra{\psi_t} \,\calO_0\, \ket{\psi_t} \,=\, \bra{\psi_0}\, \Ut^\dagger \,\calO_0\, \Ut \,\ket{\psi_0} \,=\, \bra{\psi_0} \, \calO_t\,  \ket{\psi_0}\,.
\ee

In the received view, what declares the pictures equivalent is that they give rise to the same \emph{observable predictions}---not to isomorphic time-evolving descriptions of physical systems.  

Such a low bar for equating theories is the reflection of \emph{instrumentalism}, which has long prevailed in quantum theory. 
According to that philosophy, a theory is an apparatus, an instrument, whose sole purpose is to enable us to compute predictions of measurements---it is Hanson's algorithm.
Questions about how the world is and how it gives rise to what we measure are at best ignored.
At worst, they are threatened away---`\emph{shut up and calculate}'---or they are tabooed by enforcing doctrines such as the meaninglessness of what happens between preparation and measurement.

More pervasively, instrumentalism entrenches the idea that the mysteries of quantum mechanics must remain mysteries. 
It does so implicitly by promoting the idea that only on observations---on the \emph{seen}---do we have a firm handle, while simultaneously stigmatizing attempts at explaining the seen in terms of the \emph{unseen}, that is, at explaining physical reality.
This is
a recipe for the stagnation of foundational research.

\section{Realism}\label{sec:invsre}

In contrast, \emph{realism}~\cite{popper1996realism} holds that there is a real objective world out there, independent of people and their ideas about it.
Scientific inquiry proceeds by positing stories about the world---theories---and testing them.
The exercise is inherently fallible, yet it still commits to the idea that concepts and structures in our best theories do correspond to aspects of reality, whether these aspects are close to observations or not. 
No one has ever \emph{directly} observed a nuclear reaction, but we still accept their existence since our best theories imply that they exist.

What instruments measure and what observers perceive 
are themselves physical processes---no different in kind from the phenomena being measured.
Thus the realist worldview affirms the universality of physical theories: instruments and observers are neither outside nor at the center of a theory---why would they be?
They are, after all, other physical systems.
By defending the universality of unitary quantum theory and therefore treating measurements like other interactions, Everett~\cite{everett1973theory} restored the compatibility of quantum theory with scientific realism.
Everett’s key idea was not to \emph{posit} many worlds---which he instead \emph{derived}---it was to consider unitary processes to be universal. 
With this, he rejected the instrumentalist patchwork in favour of realism.

Scientific realism immediately implies that two theories which make the same predictions are not necessarily equivalent.
Rather, they are equivalent if an isomorphism relates their structures.
For instance, Lagrangian and Hamiltonian mechanics are related by such an isomorphism:
the Legendre transform identifies the tangent and cotangent bundles of the configuration manifold.
Thus, not only do the Lagrangian and Hamiltonian formalisms yield the same observations, but the descriptions as time-evolving points indexed by either $(q, \dot q)$ or by $(q,p)$ are bijectively related.
The central claim of this chapter is that Schrödinger- and Heisenberg-picture descriptions are \emph{not} related by such an isomorphism.

\section{Heisenberg-Picture Descriptors}\label{sec:heis}

If the time-evolving wave function is the Schrödinger-picture description of a physical system, what is the Heisenberg-picture description?

In this section, I explain the framework of Deutsch--Hayden descriptors in a way that extends beyond qubits, drawing from other expositions~\cite{bedard2020information, raymond2021local, bedard2021cost, tibau2023locality}.
For a complete and pedagogical guide to descriptors in the quantum computation setting, which is arguably the most accessible exposition, see~\cite{bedard2021abc}. For more in this volume, see the chapter by Kuypers~\cite{kuypers2024restoring}.
Readers who prefer to first explore the motivation for the formalism may wish to skip ahead to \S\ref{sec:expl}.

In the Heisenberg picture, the state vector remains fixed while observables evolve in time. 
Thus, the object describing physical systems must be tied to observables, and not, despite its name, to the Heisenberg `state'.
But each system has an uncountably infinite set of observables, so how can one meaningfully describe a system in this picture? 
One might propose to track only the time evolution of specific observables whose expectation values are of interest. However, this approach is narrow in scope and lacks the generality of the Schrödinger picture, where the time-evolving state encodes the expectation values of all observables at once, or in other words, the distributions of any possible measurement.

\subsection{Generators}

The key is that all observables can be obtained from a \emph{generating set}, namely, a set of operators whose adjoints, products, and linear combinations span the entire operator algebra.
The generating set can and should be chosen such that each generator acts non-trivially on one single system.
The generators acting on a given system (and on that system only) are then collected into a tuple of operators, the \emph{descriptor} of the system.

Let~$\mathfrak U$ denote the whole system under consideration, which I shall refer to as the \emph{universe}.
Let us first consider that~$\mathfrak U$ contains a system~$\mathfrak S_1$ which is a qubit, so~$\mathfrak S_1=\mathfrak Q$.
Accordingly, the total Hilbert space is~$\hil{\mathfrak U} \simeq \hil{\mathfrak Q} \otimes \hil{\comp{\mathfrak Q}}$, where $\hil{\comp{\mathfrak Q}}$ is the Hilbert space pertaining to all degrees of freedom of systems other than~$\mathfrak Q$.
Let the reference Heisenberg state be set to $\ket{\bol 0} \in \hil{\totalsys}$, where \mbox{$\bra{\bol 0} \sigma_z \otimes \one^{\comp{\mathfrak Q}} \ket{\bol 0} = 1$}.
With this choice of initialization, the $z$ observable of the qubit is said to be \emph{sharp} with eigenvalue~$+1$. In the Schrödinger picture of quantum computing, this corresponds to the qubit being initialized in~$\ket 0 = \,\ket{\hspace{-2pt}\uparrow_z}$.
The descriptor of~$\mathfrak Q$ at time $0$ is given by
$$
\bol q_1(0) = \gen^{\mathfrak Q} \otimes \one^{\comp{\mathfrak Q}}\,, 
$$
where $\gen^{\mathfrak Q}$ is any tuple of operators that can generate an operator basis acting on the qubit space, i.e. a basis of~$\mathcal L(\hil{\mathfrak Q}) \simeq \mathcal L({\mathbb C}^2)$.
If~$\{\ket 0, \ket 1\}$ is a basis of $\hil{\mathfrak Q}$---for definiteness, let us fix it to the eigenstates of $\sigma_z$---then $\gen^{\mathfrak Q}$ can be, for instance, 
the canonical operator basis itself~$\{\ket{j}\bra{i}\}_{i,j=0,1}$. The tuple $\gen^{\mathfrak Q}$ can also be the pair of Pauli operators
$
(\sigma_x, \sigma_z)
$, because they multiplicatively generate~$\sigma_y = i \sigma_x\sigma_z$ and $\one= \sigma_x \sigma_x$; and $\{\one, \sigma_x, \sigma_y, \sigma_z\}$ is basis of \mbox{$\mathcal L({\mathbb C}^2)$}.
This choice is convenient, as Pauli generators make the action of quantum gates on qubits particularly transparent.
Yet if minimality is the goal, in fact~$\gen^{\mathfrak Q}$ can even consist of a single operator,~$\ketbra{1}{0}$. Indeed, by taking the adjoint, we find~$(\ketbra{1}{0})^\dagger = \ketbra{0}{1}$, and then by multiplication we obtain~$\ketbra{0}{0} = (\ketbra{0}{1})\,(\ketbra{1}{0})$ and~$\ketbra{1}{1}= (\ketbra{1}{0})\,(\ketbra{0}{1})$.

Suppose that the universe $\mathfrak U$ contains a second system~$\mathfrak S_2$ of $d$-dimensional Hilbert space~$\hil{\mathfrak S_2}$, with $d < \infty$. 
As before, the Hilbert space of the universe can be factorized into any subsystem and its complement, e.g. $\hil{\mathfrak U} \simeq \hil{\mathfrak S_2} \otimes \hil{\comp{\mathfrak S_2}}$ (here~$\simeq$ denotes an isomorphism, and the order of tensor factors carries no significance).
The descriptor of~$\mathfrak S_2$ at time~$0$ is given by
$
\bol q_2(0) = \gen^{\mathfrak S_2} \otimes \one^{\comp{\mathfrak S_2}}\,, 
$
where $\gen^{\mathfrak S_2}$ is any tuple of operators that can generate a basis of~$\mathcal L(\hil{\mathfrak S_2}) \simeq \mathcal L({\mathbb C}^d)$.
If~$\{\ket k\}_{k = 0}^{d-1}$ is a basis of $\hil{\mathfrak S_2}$, then $\gen^{\mathfrak S_2}$ can be the canonical operator basis $\{\ket{j}\bra{i}\}_{i,j=0,1, \dots, d-1}$.
Another choice for~$\gen^{\mathfrak S_2}$ is~$\{\ket{j}\bra{0}\}_{j=1, \dots, d-1}$, or the single operator $a = \sum_{j=1}^{d-1} \sqrt j  \, \ket{j-1}\bra{j}$. 
In each of these cases, taking the adjoint of the generators and multiplying the obtained operators together yields an operator basis\footnote{Observe that $aa^\dagger, (aa^\dagger)^2, \dots, (aa^\dagger)^d$ form a basis of the diagonal operators. Similarly, $aa^\dagger a^\dagger, (aa^\dagger)^2a^\dagger, \dots, (aa^\dagger)^{d-1}a^\dagger$ form a basis of the first subdiagonal. And so on.}.

If the universe $\mathfrak U$ contains a system~$\mathfrak S_3$ of infinite-dimensional Hilbert space~$\hil{\mathfrak S_3}$, we have again~$\hil{\mathfrak U} \simeq \hil{\mathfrak S_3} \otimes \hil{\comp{\mathfrak S_3}}$.
Let~$\{\ket k\}_{k = 0}^{\infty}$ be a countable basis of~$\hil{\mathfrak S_3}$.
The tuple $\gen^{\mathfrak S_3}$ can be represented by the family $\{\ket{j}\bra{i}\}_{i,j=0,1, \dots}$, or $\{\ket{j}\bra{0}\}_{j=1, 2 \dots}$, or it can be the pair\footnote{Because $a$ is not a bounded operator, the construction given in the previous footnote would lose ground. Instead, we note that the canonical basis can be obtained as $\{(a^{\dagger})^j \,(\ketbra 00)\, a^i / \sqrt{i!j!}\}_{ij}$.} of operators 
$( a = \sum_{j=1}^{\infty} \sqrt j  \, \ket{j-1}\bra{j} \,,\, \ketbra 00 )\,.$ 
If~$\hil{\mathfrak S_4}$ is a rigged Hilbert space admitting a Dirac-orthonormal set of eigenvectors~$\{\ket x\}_{x \in \mathbb R}$, then~$\gen^{\mathfrak S_4}$ can be~$\{\ketbra yx\}_{x,y \in \mathbb R}$\footnote{These continuously labelled operators can be formalized with Schwartz distribution theory as sesquilinear forms on test functions, 
$\ket f \,, \ket g \mapsto \bra g \ketbra yx \ket f = g^*(y) f(x)$.}.

\subsection{Separability}\label{sec:sep}

The descriptor of a collection of systems is the collection of descriptors. 
Indeed, let~$\bol q_i$ and~$\bol q_j$ be the descriptors of systems~$\mathfrak S_i$ and~$\mathfrak S_j$ respectively. 
A descriptor for the composite system~$\mathfrak S_i\mathfrak S_j$ must have the ability to generate all operators acting non-trivially on~$\hil{\mathfrak S_i} \otimes \hil{\mathfrak S_j}$ (and trivially elsewhere).
The tuple~$(\bol q_i, \bol q_j)$ works perfectly fine:~$\bol q_i$ can be used to construct a basis of operators acting on~$\hil{\mathfrak S_i}$ (and trivially elsewhere); and likewise, $\bol q_j$ spans a basis of operators acting on~$\hil{\mathfrak S_j}$.
By taking products of operators from~$\bol q_i$ with operators from~$\bol q_j$, one obtains a basis for~$\mathcal L(\hil{\mathfrak S_i} \otimes \hil{\mathfrak S_j})$.
The collection $(\bol q_i,\bol q_j)$ is, therefore, a valid descriptor for the composite system, just as required.

\subsection{Evolution of Descriptors and Observables}
Let $\mathfrak S_i$ be a system with initial descriptor $\bol q_i(0)$.
The descriptor evolves in time like operators do in the Heisenberg picture.
If~$U$ denotes the evolution on the total system~$\totalsys$ between time~$0$ and time~$t$, then
\be\label{eq:evotot}
\bol q_i(t) = U^\dagger \bol q_i(0) U \,,
\ee
where the conjugation by $U$ affects all the operators of $\bol q_i(0)$.

The time-evolved descriptor $\bol q_{i}(t)$ can be used to calculate any time-evolved observable $\mathcal O(t)$ pertaining to~$\mathfrak S^i$. 
This is first recognized at time~$0$, where~$\bol q_{i}(0)$ can generate an operator basis, and therefore, by also taking linear combinations, it can generate the initial observable~$\mathcal O(0)$. 
The time-evolved observable~$\mathcal O(t)$ is obtained from the same generative process that constructed~$\mathcal O(0)$ from~$\bol q_{\mathfrak S_i}(0)$, but instead expressed in terms of~$\bol q_{\mathfrak S_i}(t)$. 
In other words, if~$f_{\mathcal O}$ is a function encoding the generation of~$\mathcal O(0)$ from~$\bol q_{i}(0)$, $\mathcal O(0) = f_{\mathcal O}(\bol q_{i}(0))$, then~
\be \label{eq:evobs}
\mathcal O(t) = f_{\mathcal O}(\bol q_{i}(t)) \,.
\ee
Eq.~\eqref{eq:evobs} can be shown as follows.
For any $g, g' \in \gen^{\mathfrak S_i}$, $g \otimes \one^{\comp{\mathfrak S_i}}$ and $g' \otimes \one^{\comp{\mathfrak S_i}}$ are \emph{components} of $\bol q_i$, and they evolve in time according to Eq.~\eqref{eq:evotot}.
Taking the adjoint, multiplying and taking linear combinations of the time-evolved components always keep the~$U^\dagger$ and~$U$ outside of the expression,
\beas
\left(U^\dagger (g \otimes \one^{\comp{\mathfrak S_i}})U\right)^{\dagger} &=& U^\dagger (g^\dagger \otimes \one^{\comp{\mathfrak S_i}})U\\
\left(U^\dagger (g \otimes \one^{\comp{\mathfrak S_i}})U\right) 
\left(U^\dagger (g' \otimes \one^{\comp{\mathfrak S_i}})U\right) 
&=& U^\dagger (gg' \otimes \one^{\comp{\mathfrak S_i}})U\\
\lambda U^\dagger (g \otimes \one^{\comp{\mathfrak S_i}})U
+ \sigma
U^\dagger (g' \otimes \one^{\comp{\mathfrak S_i}})U
&=& U^\dagger (\lambda g+ \sigma g' \otimes \one^{\comp{\mathfrak S_i}})U \,.
\eeas
In more abstract terms, any generative manipulation in~$f_{\mathcal O}$---whether taking adjoints, products, or linear combinations---commutes with the global conjugation by~$U$.
Therefore, $f_{\mathcal O}(U^\dagger  \bol q_{i}(0) U) = U^\dagger f_{\mathcal O}(  \bol q_{i}(0) ) U $.

\subsection{Recovering the Density Operator}
The separability of descriptors (\S\ref{sec:sep}) entails that any time-evolved observable that pertains to a collection of systems can be obtained from the time-evolved descriptors of those systems.
To connect with the more familiar language of the Schrödinger picture, the descriptors corresponding to a collection of systems permit the reconstruction of the density matrix pertaining to this collection of systems.
In particular, the global density matrix can be obtained from the collection of individual time-evolved descriptors.

This can be shown as follows. Let~$\{\ket{i}\}_{i=1}^{\dim \mathcal H^{\whole}}$ be a basis of $\mathcal H^{\whole}$. 
Via their generating properties, the collection of all descriptors at time 0, denoted $\bol q_\whole(0)$, can generate the operators~$\ketbra ji$ acting on~$\mathcal H^{\whole}$. Thus, for some function $f_{ij}$,
$$
f_{ij}(\bol q_{\whole}(0)) = \ketbra ji \qquad \text{and}\qquad
f_{ij}(\bol q_{\whole}(t)) = U^\dagger \ketbra ji U \,. 
$$
The expectation value~$\langle f_{ij}(\bol q_{\whole}(t)) \rangle$ gives the matrix elements of the global density operator (in the specified basis):
\beas
\bra{\bol 0} f_{ij}(\bol q_{\whole}(t)) \ket{\bol 0} & = & \bra{\bol 0} U^\dagger \ketbra ji U \ket{\bol 0}\\
&=& \bra i U \ketbra{\bol 0}{\bol 0} U^\dagger \ket j\\
&=& \bra i \ketbra{\Psi(t)}{\Psi(t)} \ket j \\
&=& \bra i \rho_{\mathfrak U} \ket j\,.
\eeas

\subsection{Evolution of Evolutions}
In the Heisenberg picture, evolution operators can themselves be expressed as functions of evolving descriptors. Since they are, like observables, linear operators, they too can be reconstructed from descriptors. 
Suppose that between the discrete times $t-1$ and $t$, a localized operation on a possibly joint system~$\mathfrak S$ is performed---a quantum gate on~$\mathfrak S$. Let~$G_t$ denote the matrix representation of the operation on the \emph{whole} system~$\totalsys$, keeping in mind that~$G_t$ acts trivially on $\comp{\mathfrak S}$. Moreover, let $V$ be the evolution of~$\totalsys$ from time~$0$ to~$t-1$, so that,~$U= G_t V$. The evolution of descriptors can be expressed in a step-by-step fashion, relating their expression at time~$t$ with the one at time $t-1$. The descriptor of some system~$\mathfrak S_i$ at time $t$ is
\be \label{eq:step}
\bol q_i(t) = \Uf^\dagger_{G_t}(\bol q_{\mathfrak S}(t-1)) \bol q_i(t-1) \Uf_{G_t}(\bol q_{\mathfrak S}(t-1)) \,,
\ee
where~$\Uf_{G_t}(\cdot)$ is a fixed operator-valued function analogous to the~$f_{\mathcal O}$ encountered above. 
The function~$\Uf_{G_t}$ is defined by the requirement that \mbox{$\Uf_{G_t}(\bol q_{\mathfrak S}(0))=G_t$}, which is guaranteed to exist by the generative ability of $\bol q_{\mathfrak S}(0)$ to construct any linear operator acting non-trivially on ${\mathfrak S}$ (and so in particular, any unitary operator).

The expressions \eqref{eq:evotot} and \eqref{eq:step} for the evolution of $\bol q_i(t)$ are equivalent: 
\beas
 V^\dagger G_t^\dagger  \bol q_i(0) G_t V 
&=& V^\dagger \Uf^\dagger_{G_t}(\bol q(0)) V V^\dagger \bol q_i(0)V V^\dagger \Uf_{G_t}( \bol q(0) ) V \\
&=& \Uf^\dagger_{G_t}\left (V^\dagger \bol q(0) V \right ) V^\dagger \bol q_i(0) V \, \Uf_{G_t}\left(V^\dagger \bol q(0) V \right)\\
&=& \Uf^\dagger_{G_t}\left(\bol q(t-1) \right) \bol q_i(t-1) \Uf_{G_t}(\bol q(t-1)) \,.
\eeas
The second equality follows for the same reason as Eq.~\eqref{eq:evobs} holds; namely because in each term of the function $\Uf^\dagger_{G_t}\left (V^\dagger \bol q(0) V \right )$, 
products will have their inner $V^\dagger$s and $V$s cancelled, leaving only the outer ones, which can be factorized outside of the polynomial to retrieve the first line.

\subsection{No Action at a Distance}
Descriptors avoid action at a distance. 
To see this, consider, as in Eq.~\eqref{eq:step}, the evolution of some descriptor~$\bol q_i$ under the action of a gate~$G_t$ that affects system~$\mathfrak S$. 
However, let us assume here that $\bol q_i$'s system,~$\mathfrak S_i$, is not part of~$\mathfrak S$. Hence the gate~$G_t$, which does not affect~$\mathfrak S_i$, should leave~$\bol q_i$ invariant.
Let us verify this explicitly.
At time $0$, the descriptors of the two disjoint subsystems take the form
$$
\bol q_{i}(0) = \gen^{\mathfrak S_i} \otimes \one^{\mathfrak S} \otimes \one^{\comp{\mathfrak S_i\mathfrak S}} 
\qquad \text{and} \qquad
\bol q_{\mathfrak S}(0) = \one^{\mathfrak S_i}\otimes \gen^{\mathfrak S} \otimes \one^{\comp{\mathfrak S_i \mathfrak S}}\,. 
$$
It follows immediately that all components of~$\bol q_i(0)$ commute with all components of~$\bol q_{\mathfrak S}(0)$. 
Because commuting operators also commute when they are both conjugated by the same unitary operator, this commutativity is preserved in time. 
Therefore, in Eq.~\eqref{eq:step}, $\bol q_i(t-1)$ commutes with~$\Uf_{G_t}(\bol q_{\mathfrak S}(t-1))$, and the equation reduces to $\bol q_i(t) = \bol q_i(t-1)$.

Only with a separable description can we have a crisp case for no action at a distance.
For instance, according to Wallace~\cite{wallace2012emergent} `Action at a distance occurs when, given two systems~$A$ and~$B$ which are separated in space, a disturbance to~$A$ causes an immediate change in the state of~$B$, without any intervening dynamical process connecting~$A$ and~$B$'.
But what, exactly, is meant here by `the state of $B$'?
It cannot mean the total wave function, for that is always altered by any disturbance to~$A$.
And if it means the reduced density matrix~$\rho_B$, then no action at a distance collapses into the weaker condition of no-signalling.
Moreover, since the state of~$AB$ is generally more than the mere collection of~$\rho_A$ and~$\rho_B$, reduced density matrices provide only an incomplete account of systems and thus cannot fully adjudicate questions of locality. 
An operation on~$A$ may alter aspects of the joint state that are not captured by~$\rho_B$, and with no commitment as to where those aspects reside, the invariance of~$\rho_B$ offers no guarantee of locality. 
Related concerns are raised by Waegell and McQueen in this volume~\cite{waegell2025nonlocal}.
By contrast, the separable and complete description~$(\bol q_A,\bol q_B)$ leaves no residual structure with an unspecified location; any influence of~$A$ on~$B$ would have to appear directly in~$\bol q_B$.

\section{The Realist Inequivalence}\label{sec:realineq}

In this section, 
I first establish the one-to-one correspondence between descriptors and equivalence classes over unitary operations.
This serves to establish (\S~\ref{sec:noniso}) that the universal descriptor of a system cannot be put in one-to-one correspondence with the universal wave function.
 I then explore more generous descriptions that may make the Schröinger picture bijectively related to Heisenberg-picture descriptions~(\S\ref{secmore}).

\subsection{Non-isomorphic State Spaces}\label{sec:noniso}

Upon formalizing and axiomatizing local realism, Raymond-Robichaud~\cite{raymond2017equivalence} showed that any non-signalling theory whose set of operations forms a group can be lifted to a local-realistic theory.
In quantum theory, no-signalling is a property at the level of reduced density matrices, whereby actions on remote systems must leave a given density matrix unchanged.
Raymond-Robichaud's construction then gives a deeper layer of description, \emph{quantum noumenal states}, which fulfils his axiomatization of local realism.

The quantum noumenal state of a system $ \mathfrak S_k$ is defined as an equivalence class. It is the set of dynamics that differ from $U$ only in a way that do not causally concern $\mathfrak S_k$,
\be\label{eqqns}[U]^{\mathfrak S_k}= \left \{U' \in \Uni (\hil{\totalsys}) ~:~   U' = (\one^{\mathfrak S_k} \otimes W) U \text{ for some } W \in \Uni (\hil{\comp {\mathfrak S_k}}) \right \} \,,
\ee
where~$\Uni (\hil{})$ denotes the unitary operators on~$\hil{}$.
In the following theorem, first proven for qubits in Ref.~\cite{bedard2021cost}, I show that quantum noumenal states correspond one-to-one with descriptors.

\begin{theorem}[] \label{thmequiv3}
Let~$\totalsys$ be the whole system considered, with Heisenberg reference vector \mbox{$\ket{\bol 0} \in \hil{\totalsys}$}.
Assume that the whole Hilbert space admits, for a suitable set of indices $I$, the following decomposition
$$
\hil{\totalsys} = \bigotimes_{i \in I} \hil{\mathfrak S_i} \,,
$$
where~$\hil{\mathfrak S_i}$ has dimension~$d_i \in \mathbb N_{>1} \cup {\infty}$.
For all possible pairs of evolution~$U$ and~$U'$ of~$\totalsys$,
\bes
[U]^{\mathfrak S_i} = [U']^{\mathfrak S_i}~\iff \bol q_i(t) = \bol q'_i(t) \,,
\ees
where $\bol q_i (t)= U^\dagger \bol q_i(0) U $ and $\bol q'_i(t) = U'^\dagger \bol q_i(0) U' $.
\end{theorem}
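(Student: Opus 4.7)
The statement is an \emph{iff} between membership in the equivalence class \eqref{eqqns} and equality of descriptors. I would treat the two directions separately and rely on one classical fact from operator algebra in the backwards step.

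\textbf{Forward direction ($\Rightarrow$).} Assume $U'=(\one^{\mathfrak S_i}\otimes W)U$ for some unitary $W$ on $\hil{\comp{\mathfrak S_i}}$. The plan is a one-line direct computation. Since $\bol q_i(0)=\gen^{\mathfrak S_i}\otimes\one^{\comp{\mathfrak S_i}}$, every component of $\bol q_i(0)$ commutes with $\one^{\mathfrak S_i}\otimes W$ and with its adjoint, so
\[
U'^{\dagger}\bol q_i(0)U' \;=\; U^{\dagger}(\one^{\mathfrak S_i}\otimes W^{\dagger})\bigl(\gen^{\mathfrak S_i}\otimes\one^{\comp{\mathfrak S_i}}\bigr)(\one^{\mathfrak S_i}\otimes W)U \;=\; U^{\dagger}\bol q_i(0)U,
\]
which is exactly $\bol q_i(t)=\bol q'_i(t)$. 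This direction requires no dimensional assumption.

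\textbf{Backward direction ($\Leftarrow$).} Assume $U^{\dagger}\bol q_i(0)U=U'^{\dagger}\bol q_i(0)U'$. Define $V\deq U'U^{\dagger}\in\Uni(\hil{\totalsys})$; conjugating the equality on the left by $U'$ and on the right by $U^{\dagger}$ yields componentwise $V\,\bol q_i(0)=\bol q_i(0)\,V$. In other words, $V$ commutes with every operator of the form $g\otimes\one^{\comp{\mathfrak S_i}}$ for $g\in\gen^{\mathfrak S_i}$. Because the axiom placed on $\gen^{\mathfrak S_i}$ is that its adjoints, products, and linear combinations span $\mathcal L(\hil{\mathfrak S_i})$, and because the commutant of a set is closed under these very operations, $V$ commutes with the whole algebra $\mathcal L(\hil{\mathfrak S_i})\otimes\one^{\comp{\mathfrak S_i}}$.

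\textbf{Key step and main obstacle.} What remains is to conclude $V=\one^{\mathfrak S_i}\otimes W$ for some unitary $W$ on $\hil{\comp{\mathfrak S_i}}$, which is the nontrivial content of the theorem. This is the standard commutant identity
\[
\bigl(\mathcal L(\hil{\mathfrak S_i})\otimes\one^{\comp{\mathfrak S_i}}\bigr)'\;=\;\one^{\mathfrak S_i}\otimes\mathcal L(\hil{\comp{\mathfrak S_i}}),
\]
a classical computation in finite dimensions (expand $V$ in any product basis and collect coefficients) and a consequence of the commutation theorem for type~$I$ factors in the infinite-dimensional case allowed by the hypothesis $d_i\in\mathbb N_{>1}\cup\{\infty\}$. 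Unitarity of $W$ follows from unitarity of $V$. Setting $U'=VU=(\one^{\mathfrak S_i}\otimes W)U$ then places $U'$ in $[U]^{\mathfrak S_i}$, completing the implication. The only delicate point is invoking the commutant theorem in the infinite-dimensional factors of the tensor decomposition; this is why the hypothesis is phrased so as to permit a clean reduction to the bipartition $\hil{\mathfrak S_i}\otimes\hil{\comp{\mathfrak S_i}}$, where the result is textbook.
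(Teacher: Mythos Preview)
Your proof is correct. The forward direction is essentially identical to the paper's. The backward direction, however, is argued differently.

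The paper proves $(\Leftarrow)$ by contrapositive: assuming $U'=VU$ with $V$ \emph{not} of the form $\one^{\mathfrak S_i}\otimes W$, it writes $V$ in the functional form $\Uf_V(\bol q(0))$, asserts that this expression must depend explicitly on components of $\bol q_i(0)$, and then concludes that conjugation by $\Uf_V(\bol q(t))$ changes $\bol q_i(t)$. This keeps the argument inside the descriptor formalism developed earlier in the paper (Eqs.~\eqref{eq:evobs}--\eqref{eq:step}), but the two inferences ``$V\neq\one\otimes W\Rightarrow\Uf_V$ depends on $\bol q_i$'' and ``dependence on $\bol q_i\Rightarrow$ nontrivial action on $\bol q_i$'' are stated rather than proved; unpacking them rigorously is precisely the commutant identity you invoke.

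Your route is the direct one: from $U^\dagger\bol q_i(0)U=U'^\dagger\bol q_i(0)U'$ you extract that $V=U'U^\dagger$ commutes with every $g\otimes\one^{\comp{\mathfrak S_i}}$, pass to the generated algebra, and apply $(\mathcal L(\hil{\mathfrak S_i})\otimes\one)'=\one\otimes\mathcal L(\hil{\comp{\mathfrak S_i}})$. This is more standard operator algebra and is fully rigorous in both the finite and infinite cases covered by the hypothesis. What the paper's approach buys is narrative continuity with its step-by-step evolution formalism; what yours buys is a cleaner, self-contained argument that makes explicit the one nontrivial structural fact being used.
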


\begin{proof}

First, let $[U]^{\mathfrak S_i} = [U']^{\mathfrak S_i}$, namely, $U' =(\one^{\mathfrak S_i} \otimes W) U$.
\beas \label{eq:preuvepaul}
\bol q_i' (t)
&=& U'^\dagger \left( \gen^{\mathfrak S_i}\otimes \one^{\comp{{\mathfrak S_i}}} \right) U' \nonumber \\
&=& U^\dagger (\one^{\mathfrak S_i} \otimes W^\dagger) \left( \gen^{\mathfrak S_i}\otimes \one^{\comp{{\mathfrak S_i}}} \right) (\one^{\mathfrak S_i} \otimes W)U \nonumber \\
&=& U^\dagger \left( \gen^{\mathfrak S_i}\otimes \one^{\comp{{\mathfrak S_i}}} \right) U \nonumber \\
&=& \bol q_i (t) \,. 
\eeas

 To prove the other implication, `$\Longleftarrow$', assume $[U]^{\mathfrak S_i} \neq [U']^{\mathfrak S_i} $ and therefore,
 $U' \neq (\one^{\mathfrak S_i} \otimes W) U$, for some~$W$ acting on~$\comp{\mathfrak S_i}$. Hence, $U' = V U$, for some global operator $V$, whose functional representation~$\Uf_V(\bol q(0))$ depends explicitly on terms of $\bol q_i (0)$. But then, if $V$ is thought to occur between time $t$ and $t+1$,
\beas
 \bol q_i (t+1) &=& U^\dagger V^\dagger \bol q_i(0) V U \\
&=& U^\dagger V^\dagger U U^\dagger \bol q_i(0) U U^\dagger V U \\
&=& \Uf^\dagger_V(\bol q(t)) \bol q_i(t) \Uf_V(\bol q(t)) \,.
\eeas
But because of its dependence on $\bol q_i(t)$, $\Uf_V(\bol q(t))$ acts nontrivially on~$\bol q_i(t)$ which means $\bol q_i(t+1) \neq \bol q_i(t)$.
\end{proof}

This theorem shows that the descriptor of a system encompasses the part of the unitary dynamics that is in the backward light cone of the system. When that system is $\totalsys$ as a whole, 
$$\bol q_{\totalsys}(t) \simeq [U]^\totalsys = U ~\text{(up to a phase)}\,.$$

A more pedestrian approach can also be used to establish that the local descriptors of all systems provide the knowledge of the evolution operator~$U$, up to a phase.
Indeed, from the descriptors of each system, one can generate a canonical basis $\{\ketbra{j}{i}\}$ of linear operators acting on~$\mathcal H_\totalsys$, where $i$ and $j$ are appropriate labels for the total Hilbert space. 
When time-evolved, this basis is $\{U^\dagger \ketbra{j}{i} U\}_{ij}$. 
The matrix element~$\ell,k$ of~$U^\dagger \ketbra{j}{i} U$ is given by
\bes
\bra \ell U^\dagger \ketbra{j}{i} U \ket k = u^*_{j\ell} u_{ik} \,.
\ees
 By setting~$i=j=k=\ell=0$, one finds~$|u_{00}|^2$, which can be assumed to be non-zero by otherwise permuting the columns of~$U$. By setting~$j=\ell=0$, but leaving~$i$ and~$k$ free, one finds~$u^*_{00} u_{ik}$ for all~$i$ and~$k$. Therefore, up to a phase ($u^*_{00}/|u_{00}|$), $U$ can be computed from $U^\dagger \ketbra{j}{i} U$ for all~$i$ and~$j$, which can be computed from $\bol q_i(t)$ for all~$i$.

 With this equivalent representation of descriptors in hand, we can easily recognize that they are not isomorphic to state vectors.

The descriptor state space is given by
\be\label{eq:desspace}
\mathsf{H}\text{-}\mathsf{Descriptors}^\totalsys = \{ U^\dagger \bol q^{\totalsys}(0) U : U \in \Uni (\hil{\totalsys})\}\,.
\ee
As we have seen, this is isomorphic to $\Uni (\hil{\totalsys}) / \Uni(1)$, or equivalently, to the projective unitary group $\mathcal P (\Uni (\hil{\totalsys}))$.

On the other hand, the Schrödinger-picture state space is given by
the equivalence classes of unit vectors under the equivalence relation
$\ket\psi  \hspace{-2pt}\sim  \,\,\ket \phi$ if and only if $ \ket \psi = e^{i\theta} \ket \phi$.
To express this with familiar objects,
$$
\mathsf{S}\text{-}\mathsf{States}^{\totalsys} 
= \{U \ket{\bol 0} : U \in \Uni (\hil{\totalsys})\} \,/ \sim \,.
$$
This space corresponds to the \emph{projective Hilbert space}~$\mathcal{P}(\hil{\totalsys})$.
Therefore, there is no isomorphism between these spaces, as
$$
\mathsf{H}\text{-}\mathsf{Descriptors}^\totalsys  \simeq \mathcal P(\Uni (\hil{\totalsys})) \not \simeq \mathcal{P}(\hil{\totalsys}) \simeq \mathsf{S}\text{-}\mathsf{States}^{\totalsys} \,.
$$
In other words, the collection of descriptors $\{\bol q_i(t)\}_{i = 1,...,n}$ contains all the information about the whole unitary (up to a phase).
On the other hand, the global state encompasses only a part of the unitary: essentially one column in some basis, because~\mbox{$\ket{\psi(t)} = U \ket{\bol0}$}.
For advocates of Schrödinger-picture Everettian quantum theory, the world is described by the universal wave function.
That is a thinner description than what is provided by the collection of descriptors. 

\subsection{Larger Schrödinger-Picture Descriptions?}\label{secmore}

It may be suggested that to properly describe systems in the Schrödinger picture, one should specify more than the global state alone. The dynamics also matter. One natural move would be to append the state~$\ket{\Psi(t)}$ with the unitary operator~$U$ that generated it, thereby enriching the Schrödinger description with explicit dynamical information. But how should~$\ket{\Psi(t)}$ and~$U$ be located within the subsystem structure?
Strapping the full pair~$(\ket{\Psi(t)},U)$ onto each subsystem would manifestly violate no action at a distance, 
since any local gate would alter the appended~$U$ (and hence~$\ket{\Psi(t)}$) simultaneously for all systems.

Hence, if we are to parallel the Heisenberg description, we must seek a genuinely local expansion of the Schrödinger picture. Instead of a single global object, we would need a collection $\{s_i(t)\}_i$ of subsystem-specific descriptors, each $s_i(t)$ playing the role of a Schrödinger-side analogue to the Heisenberg $\bol q_i(t)$. Such a construction, if it exists, would provide an isomorphism at the level of descriptions.

Mathematically, the structure of each $s_i(t)$ would need to be recoverable from the global unitary evolution, together with the initial state and observables, just as Heisenberg descriptors are. Yet here lies the difficulty: when we examine concrete proposals for such $s_i(t)$, what emerges looks nothing like the familiar Schrödinger picture.

One candidate is the quantum noumenal state $[U]^{\mathfrak S_i}$ encountered previously (see Eq.~\eqref{eqqns}). It is the equivalence class of global unitaries from $0$ to $t$, modulo those operations that lie outside the causal past of system~$\mathfrak S_i$. Relative to a fixed reference state, it captures precisely the unitary history that could influence $\mathfrak S_i$ at time $t$.

Another proposal is Waegell’s `local fluids in spacetime' framework~\cite{waegell2023local}, in which each system is described by its internal memory. This memory includes both the reference state $\ket{\bol 0}$ and the complete causal record of all interactions in the system’s past light cone. Unlike descriptors or noumenal states, which compress the past, internal memories retain it in full detail. For example, if a gate and its inverse are applied in sequence, the memory records both, whereas the descriptor erases the redundancy.

Both noumenal states and internal memories thus offer locally defined structures in spacetime, parallel in spirit to Heisenberg descriptors. But by dispensing altogether with the evolving state vector, they depart so radically from the traditional Schrödinger picture that they cannot plausibly be regarded as its reformulation.

\section{Additional Explicans}\label{sec:expl}

As I demonstrated, the Heisenberg-picture description of a quantum system stands in a many-to-one correspondence with its Schrödinger-picture state. 
The richer structures encompassed by descriptors offer a separable account of quantum systems, which in turn gives a precise notion of no action at a distance, which it also satisfies.
Consequently, descriptors fulfill what Kuypers calls, in this volume~\cite{kuypers2024restoring}, \emph{the principle of locality}.
In this section, I survey the consequences of the locality of descriptors, without pursuing the formal details, for which I point to further reading.

\subsection{Local Superdense Coding}
\emph{Superdense coding}~\cite{bennett1992communication} is a quantum information protocol which permits the sending of two classical bits, $i$ and $j$, by transmitting only one qubit; assuming that a pair of qubits in a known Bell state is shared by the sender (Alice) and the receiver (Bob).
%
In the Schrödinger picture, the phenomenon is explained as follows: by affecting her qubit in one of four ways via the Pauli operations $\sigma_z^i \sigma_x^j$,
Alice alters the entangled state to any one of the four Bell states. 
Should Alice's qubit be intercepted while being transmitted, the bits $i$ and $j$ cannot be retrieved from the qubit alone, since, regardless of which Bell state it is, the corresponding reduced density matrix is completely mixed. 
Thus, one might posit that the information about $i$ and $j$ resides in some global properties of the entangled pair. 
When Bob receives Alice's qubit, he measures the pair in the Bell basis and retrieves the two bits, $i$ and $j$.

The explanation in terms of descriptors is very different.
When Alice performs the operations $\sigma_z^i \sigma_x^j$, the bit~$i$ gets encoded in the $x$-component of her descriptor and the bit~$j$ in the $z$-component.
As it should be, nothing changes on Bob's side; the bits are localized within Alice's system and within Alice's system only. 
Yet, no measurement performed on that system alone can reveal information about~$i$ and~$j$. 
This is because the information is \emph{locally inaccessible}---it is encoded in Alice's descriptor, yet it can only be retrieved upon interacting with Bob's system. 
In the descriptor of its qubit, Bob holds the key to render $i$ and $j$ accessible after he receives Alice's qubit.
See Ref.~\cite[\S6]{bedard2021abc} for more details.

\subsection{Local Teleportation}\label{sec:teleportation}

\sloppy
\emph{Quantum teleportation}~\cite{bennett1993teleporting} is a quantum information protocol in which Alice transmits the state of a qubit by communicating two bits of classical information, and using shared entanglement.
Even if one treats measurement unitarily, the Schrödinger picture lacks a local explanation of the phenomenon; the qubit appears to be~`teleported'.
This is because the state vectors themselves are inadequate for localizing information. 
The complex parameters encoding the qubit's state at Alice's location are not tied to Alice's Hilbert space:
The state vector can be equivalently expressed with the parameters residing on Bob's system, albeit masked by Pauli operators.
Bob corrects these operators after receiving the classical bits, completing the teleportation.
See Ref.~\cite[\S2]{bedard2023teleportation} for the calculations.

Expressing the situation in terms of descriptors reveals how the quantum information is transported fully locally \emph{by the classical bits}. 
But how can classical information transport quantum information? 
We may recall Everett here, for there is, strictly speaking, no such thing as purely classical information. In a unitary framework, what we call “classical” is only the quantum made to look classical—an appearance that must be explained from within quantum theory itself.
Accordingly, teleportation remains successful under decoherence in the communication channel, or when the channel consists of a cascade of intermediate systems.
These are desirable properties of communication processes we might want to call `classical' in a fundamentally quantum world.
See 
Ref.~\cite{bedard2023teleportation} for a detailed discussion of teleportation.

\subsection{Truly Local Branching}

The discontinuous, non-local, logically irreversible and fundamentally stochastic collapse was shown to be illusory by Everett.
He did so by demonstrating how, in all respects where the collapse was deemed empirically necessary, unitary evolutions were in fact sufficient. 
Such empirical facts include the apparent irreversibility of measurements, the apparent uniqueness of measurement outcomes, their unpredictability, and their stability under repeated measurements and across observers.

The prerequisite to defending locality in quantum theory—the overarching theme of this volume—is to dispense with the dynamical non-locality of collapse. In its place, unitary quantum theory has \emph{branching}, the process by which systems evolve into distinct and autonomous entities.
In the Schrödinger picture, branching occurs when the wave function evolves into a sum of distinct relative states. These states remain autonomous because surrounding systems become entangled with the measured system, thereby proliferating records of the outcomes and preventing further interference.
Yet advocates of Everett in the Schrödinger picture disagree on whether and how branching is local.  
It suffices for my purposes to criticize the account I am most sympathetic to, the so-called `local branching' in the Schrödinger picture as put forth by Wallace~\cite[Chapter~8]{wallace2012emergent} and further discussed in this volume by Blackshaw, Huggett and Ladyman~\cite{BHLEverettian}.
And I shall criticize it on the basis that it is, in fact, not local.

Let us consider two particles entangled in their spin degrees of freedom, so that up to normalization their joint state is \mbox{$\ket{\hspace{-3pt}\uparrow}_1\ket{\hspace{-3pt}\uparrow}_2 \,+\,
 \ket{\hspace{-3pt}\downarrow}_1\ket{\hspace{-3pt}\downarrow}_2$}. Let~$\ket{\text{\footnotesize{Ready}}}_{\hspace{-1pt}A}$ denote a ready state for Alice and her measurement apparatus, and likewise~$\ket{\text{\footnotesize{Ready}}}_{\hspace{-1pt}B}$ for Bob and his apparatus.
Suppose that the measurements performed by Alice and Bob, respectively on Particles~1 and~2, happen at spacelike separation.  
The following unitary evolution relates the global state on spacelike hypersurfaces before and after both measurements;
$$
\left(\vphantom{A^{A}_A} \ket{\hspace{-3pt}\uparrow}_1\ket{\hspace{-3pt}\uparrow}_2 \,+\,
 \ket{\hspace{-3pt}\downarrow}_1\ket{\hspace{-3pt}\downarrow}_2  \right) \,
 \ket{\text{\footnotesize{Ready}}}_{\hspace{-1pt}A} \,
 \ket{\text{\footnotesize{Ready}}}_{\hspace{-1pt}B} 
~ \to ~
\ket{\hspace{-3pt}\uparrow}_1\ket{\hspace{-3pt}\uparrow}_2 \ket{\text{`$\uparrow$'}}_A \ket{\text{`$\uparrow$'}}_B 
~+~
 \ket{\hspace{-3pt}\downarrow}_1\ket{\hspace{-3pt}\downarrow}_2  \ket{\text{`$\downarrow$'}}_A \ket{\text{`$\downarrow$'}}_B  \,.
$$ 
In the above equation,~$\ket{\text{`$\uparrow$'}}_A$ and~$\ket{\text{`$\downarrow$'}}_A$ denote states of Alice and her measurement apparatus recording respectively the up and down outcome; and analogously for~$\ket{\text{`$\uparrow$'}}_B$ and~$\ket{\text{`$\downarrow$'}}_B$.

Taking the wave function at face value, it describes two distinct branches. The existence of two branches corresponding to Alice's possible outcomes is unproblematic, and the same applies to Bob.
What is puzzling, however, is that the branches extend across spacelike-separated regions, and that they already identify outcomes before any comparison has occurred.
According to the first term of the wave function, the Alice who measured~`$\uparrow$' is in the same branch as the Bob who also measured~`$\uparrow$', even though these measurements occurred at spacelike separation, and no physical interaction or comparison has yet occurred.
%
What mechanism enforces this nonlocal identification of outcomes, if branching is assumed to occur locally?

With descriptors, this difficulty does not arise.
When Alice measures her particle, Alice's descriptor evolves into a sum of two relative descriptors, each of which indicates a definite outcome.
The particle’s descriptor is also affected by the measurement, but nothing else changes.
Bob and all other systems not involved in Alice’s measurement remain completely unaffected, in line with the principle of locality.
When Bob measures his particle, he likewise evolves locally into two instances. 

Crucially, at this stage, the Alice who measured~`$\uparrow$' is not yet identified with the Bob who measured~`$\uparrow$'; the sets of branches are generated independently and locally.\footnote{From Alice’s and Bob’s relative descriptors at this point, one can predict what would happen if they later generated a joint record of outcomes. But this predictive power merely reflects determinism: future configurations follow from past states and the dynamics.
}
Only when Alice and Bob later compare results—an interaction that must itself be treated quantum mechanically—do the local branches merge into common branches. This comparison is crucial for explaining Bell locally, to which I now turn. See Refs.~\cite{kuypers2021everettian, kuypers2024restoring} for further discussion of local branching with descriptors.

\subsection{Local Violations of Bell Inequalities}

Some advocates of Everettian quantum mechanics invoke the multiplicity of outcomes in measurements as a way out of Bell’s theorem~\cite{bell1964}. And indeed, an assumption made by Bell is that measurements have a unique outcome, so allowing multiple outcomes blocks the theorem at the outset. But simply noting this does not yet explain how the coexistence of outcomes operates so as to reproduce the violations of Bell inequalities observed in experiments, let alone how it does so locally.

For each pair of CHSH inputs, Alice and Bob locally rotate their systems and measure them. 
Each observer locally branches into two versions of themselves, recording outcomes~`0' and~`1' with multiversal measures~$(1/2,1/2)$
Physical systems that testify to the violation of Bell’s inequality emerge only when Alice and Bob interact to compare results—that is, when a joint record is generated.
This joint record locally branches into four versions, corresponding to the four possible pairs $(00,01,10,11)$. The multiversal measures assigned to those records precisely match the quantum statistics: in the CHSH test, the winning pairs sum to~$\cos^2(\pi/8)$.

As in teleportation (see Sec.~\ref{sec:teleportation}), the communication that enables the comparison of results is “classical” only in the quantum-theoretic sense: robust under decoherence and implementable as a chain of local interactions. Bell experiments thus illustrate not nonlocal coordination at a distance, but instead a phenomenon that escapes single-world logic: joining records is not a trivial operation in the multiverse.
The branches with multiversal measures~$(1/2,1/2)$ combine in a nontrivial way when assembled into joint lists.
See Ref.~\cite{bedard2025explaining} for a full analysis.

\section{Discussion}\label{sec:disc}

The many-to-one correspondence between the universal descriptor and the global Schrödinger state was noticed by Timpson~\cite{timpson2005nonlocality}, and further studied with Wallace in Ref.~\cite{wallacetimpson07}. They argued that since descriptors corresponding to the same Schrödinger state lead to the same observations, they should be identified by a `quantum gauge equivalence'. In this case, the additional descriptor structure is discarded, and one is left with the usual Schrödinger state, thereby retrieving the familiar `nonlocality of states'. In contrast, Raymond-Robichaud~\cite{raymond2021local}, who also emphasized the \emph{non-injectivity} of the morphism between noumenal states (descriptors) and phenomenal states (density matrices), rejects the Wallace–Timpson identification. He treats noumenal states as elements of reality in their own right, thereby restoring locality even when distinct noumenal states give rise to the same observations.

Rejecting the Wallace–Timpson identification risks the charge of metaphysical indulgence, since it posits entities not fixed by the Schrödinger state alone. The first reply is that the boundary between metaphysical and physical shifts with the growth of knowledge. For instance, atomism, the corpuscular theory of light and the theory of terrestrial motion were branded speculative metaphysics before they became testable science. 
%
But there is a stronger reply available. 
Deutsch argues that the Wallace–Timpson gauge would eliminate physically meaningful distinctions concerning dynamics, causation, and explanation~\cite{deutsch2011vindication}. 
Descriptor differences are indeed invisible if one restricts attention to what is observable on a single system at a fixed given time. 
Yet once the underlying dynamical process is probed---by varying inputs, prepending gates, or inserting suitable measurement interactions---the dynamics can be identified, and with them the corresponding descriptor histories. 
Thus if descriptors are underdetermined by a fixed-time Schrödinger state, they are not underdetermined by physical processes.
The descriptor surplus is not arbitrary metaphysical decoration, it is dynamical structure. 

Relatedly, a reason I am putting forth for opposing the Wallace–Timpson identification is that descriptors allow us to solve important problems, such as those laid out in~\S\ref{sec:expl}.
We ought, therefore, to take them seriously.
If descriptors supply explanatory resources unavailable in the Schrödinger picture, quotienting them away would not be a harmless gauge choice. 
It would discard precisely what does the explanatory work.
The point would only become starker with further progress.
%
%

Finally, there is little reason to regard the axioms from which we commonly draw the instrumentalist equivalence as the final word.
Notably, \textbf{A4} posits a linear functional between states and observables, which fixes the observational predictions.
%
%
Advocates of Everettian quantum mechanics have rightly been dissatisfied with such a bare axiomatic rule, and have explored many routes 
to explain the expectation-value calculus and its associated probabilities.
For the Heisenberg programme, the key challenge is to understand the status of the Heisenberg state: what exactly is it, and why should it appear in expectation values at all?
Only if we take~\textbf{A4} as definitive and regard equality of expectation values as exhausting the physical content of a description can the usual equivalence of the Heisenberg and Schrödinger pictures be sustained. 
But surely that cannot be the whole story.


\subsection*{Acknowledgements}
I am grateful to Jacob Barandes, David Deutsch, Samuel Kuypers, Alyssa Ney, Simon Saunders, Christopher Timpson, Lev Vaidman, and Vlatko Vedral for stimulating discussions and feedback on earlier versions of this chapter.
I am also grateful to the Conjecture Institute for welcoming me as a fellow and for its intellectual support, particularly through the careful feedback of Logan Chipkin.

This work was supported 
by
the Mitacs Elevate postdoctoral fellowship in
partnership with Bbox Digital.


\end{document}